\newcommand{\beq}[1]{\begin{equation}\label{#1}}
\newcommand{\eeq}{\end{equation}}
\newcommand{\beqn}[1]{\begin{eqnarray}\label{#1}}
\newcommand{\eeqn}{\end{eqnarray}}
\newtheorem{thmbody}{Theorem}
\newenvironment{thm}{
\begin{thmbody}
	}{
	\end{thmbody} 
	}
\newtheorem{dfnbody}{Definition}
\newtheorem{corbody}{Corollary}
\newenvironment{cor}{
\begin{corbody}
	}{
	\end{corbody} 
	}
\newtheorem{lemmabody}{Lemma}
\newenvironment{lemma}{
\begin{lemmabody}
	}{
	\end{lemmabody} 
	}
\newtheorem{propbody}{Proposition}
\newenvironment{proof}{
	{\it Proof:}
	}{
 $\Box$
	}
\begin{document}
\title{Exponential Source/Channel Duality}


\author{\IEEEauthorblockN{Sergey Tridenski}
\IEEEauthorblockA{EE - Systems Department\\Tel Aviv University, Israel\\
Email: sergeytr@post.tau.ac.il}
\and
\IEEEauthorblockN{Ram Zamir}
\IEEEauthorblockA{EE - Systems Department\\Tel Aviv University, Israel\\
Email: zamir@eng.tau.ac.il}}



\maketitle
\begin{abstract}
%
We propose a source/channel duality in the exponential regime,
where success/failure in source coding parallels
error/correctness in channel coding,
and a distortion constraint becomes
a log-likelihood ratio (LLR) threshold.
We establish this duality by first deriving exact exponents
for lossy coding of a memoryless source $P$,
at distortion $D$, for a general i.i.d. codebook distribution $Q$,
for both encoding success ($R<R(P,Q,D)$) and failure ($R>R(P,Q,D)$).
We then turn to maximum likelihood (ML) decoding over
a memoryless channel $P$ with an i.i.d. input $Q$,
and show that if we substitute $P=QP$, $Q=Q$,
and $D=0$ under the LLR distortion measure,
then the exact exponents for decoding-error
($R < I(Q, P)$) and strict correct-decoding ($R > I(Q, P)$)
follow as special cases of the exponents for source encoding success/failure, respectively.
Moreover, by letting the threshold $D$ take general values,
the exact random-coding exponents for erasure ($D>0$) and
list decoding ($D<0$) under the simplified Forney decoder
are obtained.
Finally, we derive the exact random-coding exponent for Forney's optimum tradeoff erasure/list decoder, and show that at the erasure regime it coincides with Forney's lower bound and with the simplified decoder exponent.
\footnote{This research was supported in part by the Israel Academy of Science, grant \# 676/15.}
\end{abstract}

\begin{IEEEkeywords}
erasure/list decoding, random coding exponents, correct-decoding exponent, source coding exponents.
\end{IEEEkeywords}


%
%
%
%


\section{Introduction} \label{Introduction}
The aim of our study is to define an analogy
between lossy source coding and coding through a noisy channel,
allowing to translate the terms and results between the two branches.
We consider an analogy,
in which encoding for sources corresponds to
decoding for channels, 
{\em encoding success} translates to {\em decoding error},
and 
a source 
translates to 
a channel input-output pair.
Channel coding, in the random coding setting with a fixed generating distribution $Q$, emerges as a special case of lossy source coding.
Although other analogies may be possible,
the proposed analogy 
requires minimal generalization on
the source coding part.
Generalization of the channel decoder, on the other hand, leads to a broader correspondence
between the two branches, 
such that {\em correct-decoding} event for channels
(which becomes a rare event for a sufficiently large codebook)
translates to {\em encoding failure}
for sources.

In other works on the source/channel duality,
the rate-distortion function of a DMS $P(x)$
\begin{displaymath}
R(P, D) \;\; = \;\; \min_{\\W(\hat{x} \,|\, x): \;\; \mathbb{E}\,\left[d(X, \,\hat{X})\right] \; \leq \; D} \; I(X; \hat{X}),
\end{displaymath}
and the capacity of a DMC $P(y \, | \, x)$
\begin{displaymath}
C(P) \;\; = \;\; \max_{Q(x)} \; I(X; Y)
\end{displaymath}
are related directly via introduction of a cost constraint on the channel input
\cite{GuptaVerdu11}, \cite{PradhanChouRamchandran03}, \cite{Blahut74},
or using covering/packing duality \cite{JanaBlahutISIT2006}.
In order to look at the similarities and the differences of
the expressions for the RDF and capacity more closely,
let us rewrite them in a unified fashion,
with the help of the Kullback-Leibler divergence $D(\cdot \| \cdot)$,
as follows:
\begin{equation} \label{eqRPQDdetailed}
R(P, D) \; = \; \min_{Q(\hat{x})}
\min_{\substack{\\W(\hat{x} \,|\, x): \\
\\ d(P\,\circ\, W) \; \leq \; D}} D(P \circ W \, \| \, P \times Q),
\end{equation}
\begin{equation} \label{eqRQPQ0detailed}
C(P) \; = \; \max_{Q(\cdot)}
\min_{\substack{\\W(\hat{x} \,|\, x, \, y): \\
\\ d((Q\,\circ \,P)\,\circ\, W) \; \leq \; 0}} D\big((Q\circ P) \circ W \, \| \, (Q\circ P) \times Q\big),
\end{equation}
where in the case of the capacity we use a {\em particular} distortion measure,
defined by the LLR as
\begin{equation} \label{eqDmeasure}
d\big((x, y), \hat{x}\big) \; \triangleq \; \ln \frac{P(y \,|\, x)}{P(y \,|\, \hat{x})}.
\end{equation}
Note, that the distortion constraint in the capacity case (\ref{eqRQPQ0detailed}) is $D = 0$.
More concisely, the expressions (\ref{eqRPQDdetailed}) and (\ref{eqRQPQ0detailed}) may be written with the help of the function
$R(P, Q, D)$, 
defined as the inner $\min$ in (\ref{eqRPQDdetailed}),
\cite{ZamirRose01},
which represents the rate in lossy coding of a source $P$
using an i.i.d. codebook $Q$ under a distortion constraint $D$:
\begin{align}
R(P, D) \; & = \; \min_{\substack{\\Q}}\; R(P, Q, D),
\label{eqRPQD} \\
C(P) \; & = \; \max_{\substack{\\Q}} \; R(Q\circ P, \,Q, \,0).
\label{eqRQPQ0}
\end{align}
The expression for the capacity (\ref{eqRQPQ0detailed}), or (\ref{eqRQPQ0}), follows by our new results, and it can also be shown directly,
by the method of Lagrange multipliers,
that, for the distortion measure (\ref{eqDmeasure}), $R(Q\circ P, \, Q, \, 0) \, = \, I(Q\circ P)$, where $I(Q\circ P)$ is the mutual information.
Obviously, $\min_{\,Q} R(P, Q, D)$ and $\max_{\,Q} R(Q\circ P, \,Q, \,0)$ are two mathematically different problems
and it is difficult to relate between them. On the other hand,
for a given $Q$, i.e. before minimization/maximization over $Q$, the expression for channels in (\ref{eqRQPQ0detailed}) is just a special case of the expression for sources in (\ref{eqRPQDdetailed}) or (\ref{eqRPQD}).
Therefore, in this work we focus on the source/channel analogy for a given $Q$.
In the rest of the paper, $Q$ plays the role of a generating distribution
of an i.i.d. random codebook.

We extend the described analogy to the framework of random coding exponents.
In our terminology, the source encoding failure is the same as the source coding error defined in \cite{Marton74}.
We derive asymptotically-exact exponents of source encoding failure and success,
which are similar in form to the best lossy source coding exponents given in \cite{Marton74} and \cite[p.~158]{CsiszarKorner},
respectively, but correspond to i.i.d. random code ensembles
generated according to an arbitrary (not necessarily optimal) $Q$.
Such ensembles prove to be useful for adaptive
source and channel coding \cite{ZamirRose01},
\cite{TridenskiZamirITW2015}.

Next, we modify the ML channel decoder with a threshold $D$.
The resulting decoder can be identified as
a simplified erasure/list decoder \cite[eq.~11a]{Forney68},
suggested by Forney as an approximation
to his optimum tradeoff decoder \cite[eq.~11]{Forney68}.
Exact random coding exponents for the simplified decoder,
via the source/channel analogy,
then become corollaries of our source coding exponents,
where Gallager's random coding error exponent of the ML
decoder is obtained as a special case for $D = 0$.

The fixed composition version of the random coding error exponent
for the simplified decoder
was derived recently in \cite{WeinbergerMerhavISIT2015}.
In comparison, the i.i.d. random coding error exponent, derived here,
can be expressed similarly to Forney's random coding bound for the optimum tradeoff decoder \cite[eq.~24]{Forney68},
and therefore can be easily compared to it.
We show that the exact i.i.d. random coding exponent of the simplified decoder
coincides with Forney's lower bound on the random coding exponent
\cite[eq.~24]{Forney68} for $T \equiv D \geq 0$.
It follows, that Forney's lower bound is tight for random codes, also with respect to the optimum tradeoff decoder, for $T \geq 0$.

Finally, we derive an exact random coding exponent
for Forney's optimum tradeoff decoder \cite[eq.~11]{Forney68}
for all values of the threshold $T$.
The resulting expression is also similar to the original Forney's random coding bound \cite[eq.~24]{Forney68}
and we show directly that they coincide for the threshold $T \geq 0$.
This proves a conjecture in \cite{BaruchMerhav11},
and also extends/improves the results in \cite{Merhav08}, \cite{BaruchMerhav11} for list decoding.

In what follows, we present our results for sources, then translate them to the results for channels.
We discuss briefly the relation of the new channel exponents to the error/correct exponents of the ML decoder.
Finally, we present our result for Forney's optimum tradeoff decoder.
In the remainder of the paper, a selected proof is given.
The rest of the proofs can be found in \cite{TridenskiZamir17}.


\section{Exponents for sources} \label{Sources}
We assume that the source alphabet ${\cal X} = \{x: P(x) > 0\}$
and the reproduction alphabet $\hat{\cal X} = \{\hat{x}: Q(\hat{x}) > 0\}$
are finite.
Assume also an additive distortion measure, of arbitrary sign, $d: {\cal X} \times \hat{\cal X} \rightarrow \mathbb{R}$,
such that the distortion
in a source-reproduction pair $({\bf x}, \hat{\bf x})$ of length $n$,
is given by $d({\bf x}, \hat{\bf x}) = \sum_{i\, = \, 1}^{n}d(x_{i}, \hat{x}_{i})$.
For an arbitrary distortion constraint $D$ and a distribution $T(x)$ over ${\cal X}$, let us define a function
\begin{equation} \label{eqRTQDDefinition}
R(T, Q, D) \; \triangleq \; \min_{W(\hat{x} \,|\, x): \;\; d(T\,\circ\, W) \; \leq \; D} D(T \circ W \, \| \, T \times Q),
\end{equation}
where $d(T\circ W) \triangleq \sum_{x, \, \hat{x}}T(x)W(\hat{x} \,|\, x)d(x, \hat{x})$.
If the set $\{W(\hat{x} \,|\, x): \; d(T\circ W) \, \leq \, D\}$
is empty, then $R(T, Q, D) \triangleq +\infty$.
For brevity, we define also the following function
\begin{equation} \label{eqE0}
{E\mathstrut}_{0}(s, \rho, Q, D)
\; \triangleq \;  -\ln \sum_{x}P(x)\Bigg[\sum_{\hat{x}}Q(\hat{x})e^{-s[d(x,\, \hat{x})-D]}\Bigg]^{\rho}.
\end{equation}

Consider a reproduction codebook of $M = e^{nR}$ random codewords
$\hat{\bf X}_{m}$ of length $n$,
generated i.i.d. according to the distribution $Q$.
Let {\bf X} be a random source sequence of length $n$ from the DMS $P$.
Let us define encoding success as an event
\begin{equation} \label{eqEncSuccEvent}
{\cal E}_{s} \;\; \triangleq \;\; \big\{\exists \, m\,: \;\;\; d({\bf X}, \hat{\bf X}_{m}) \;\; \leq \;\; nD\big\}.
\end{equation}
Then, our results for encoding success exponent can be formulated as follows:

{\em Definition 1 (Implicit formula):}
\begin{equation} \label{eqEsDefinition}
{E}_{s}(Q, R, D)
\; \triangleq \;
\min_{T(x)} \; \Big\{ D(T \| P) + {\big| R(T, Q, D) - R \big|\mathstrut}_{}^{+} \Big\}.
\end{equation}
Note that this expression is zero for $R \geq R(P, Q, D)$.
\begin{thm}[Explicit formula]\label{thm1} \hfill
\begin{equation} \label{eqES}
{E}_{s}(Q, R, D)
\; = \; \sup_{0 \, \leq \,\rho \,\leq \,1}
\;\;
\sup_{s\,\geq\,0} \;
\;
\big\{ {E\mathstrut}_{0}(s, \rho, Q, D) - \rho R \big\}.
\end{equation}
\end{thm}

\begin{thm}[Encoding success exponent]\label{thm2} \hfill
\begin{equation} \label{eqSuccess}
\lim_{n \, \rightarrow \, \infty} \; \left\{-\frac{1}{n}\ln \Pr\{{\cal E}_{s}\}\right\} \;\; = \;\;
{E}_{s}(Q, R, D),
\end{equation}
{\em except possibly for $D =  D_{\min} = \min_{x, \, \hat{x}}d(x, \hat{x})$, when the right-hand side is a lower bound.}
\end{thm}

Let us define encoding failure as a complementary event ${\cal E}_{f} \triangleq {{\cal E}_{s}\mathstrut}^{c}$.
Then, our results for encoding failure exponent can be formulated as follows:

{\em Definition 2 (Implicit formula):}
\begin{equation} \label{eqEfDefinition}
{E}_{f}(Q, R, D)
\; \triangleq \;
\min_{T(x): \;\; R(T, \,Q, \,D) \; \geq \; R} \; D(T \| P).
\end{equation}
Note that this expression is zero for $R \leq R(P, Q, D)$
and is considered $+\infty$ if $R > R_{\text{max}}(D) = \max_{\,T(x)}R(T, Q, D)$.

We give an explicit formula,
which does not always coincide with the implicit formula (\ref{eqEfDefinition}) for all $R$,
but gives the best convex ($\cup$) lower bound for (\ref{eqEfDefinition}) as a function of $R$, for sufficiently lax distortion constraint $D$:
\begin{thm}[Explicit formula]\label{thm3}
{\em For distortion constraint $\;D \; \geq \; \max_{x} \min_{\hat{x}} d(x, \hat{x})$,}
\begin{equation} \label{eqEF}
\text{\em l. c. e.}\;\big\{{E}_{f}(R)\big\} =
\sup_{\rho \,\geq \,0}
\;
\inf_{\substack{\\ s\,\geq\,0}} \;
\big\{ {E\mathstrut}_{0}(s, -\rho, Q, D) + \rho R \big\}.
\!\!
\end{equation}
{\em For $\;D \; < \; \max_{x} \min_{\hat{x}} d(x, \hat{x})$, the right-hand side expression gives zero, which is strictly lower than ${E}_{f}(Q, R, D)$,
if $R \, > \, R(P, Q, D)$.}
\end{thm}
Note that the above explicit formula is similar to the lower bound on the failure exponent given in
\cite{Blahut74} (except here it is without maximization over $Q$ and pertains to the random code ensemble of distribution $Q$).
Our result is also more specific about the relationship (convex envelope) between the lower bound and the true exponent, given by (\ref{eqEfDefinition})
according to
\begin{thm}[Encoding failure exponent]\label{thm4} \hfill
\begin{equation} \label{eqFailure}
\lim_{n \, \rightarrow \, \infty} \; \left\{-\frac{1}{n}\ln \Pr\{{\cal E}_{f}\}\right\} \;\; = \;\;
{E}_{f}(Q, R, D),
\end{equation}
{\em with the possible exception of points of discontinuity of the function ${E}_{f}(R, D)$.}
\end{thm}

\section{Exponents for channels} \label{Channels}
We assume a DMC with finite input and output alphabets
${\cal X}$ and ${\cal Y}$.
For simplicity,
we assume also that for any $(x, y)\in {\cal X} \times {\cal Y}$ the channel probability $P(y \, | \, x) > 0$.
Consider a codebook of $M = e^{nR} + 1$ random codewords ${\bf X}_{m}$ of length $n$,
generated i.i.d. according to a distribution $Q$ over ${\cal X}$.
Without loss of generality,
assume that message $m$ is transmitted.
Let {\bf Y} be a response, of length $n$, of the DMC $P$
to the input ${\bf X}_{m}$.
Let us define decoding error as an event:
\begin{equation} \label{eqErrorEvent}
{\cal E}_{e} \; \triangleq \;\left\{\exists \, m'\,\neq \,m\,: \;\;\; \ln\frac{P({\bf Y} \,|\, {\bf X}_{m})}{P({\bf Y} \,|\, {\bf X}_{m'})} \;\; \leq \;\; nD\right\},
\end{equation}
corresponding to
a simplified erasure/list decoder \cite[eq.~11a]{Forney68}.
Observe, from comparison of the events (\ref{eqEncSuccEvent}) and (\ref{eqErrorEvent}),
that the latter 
can be seen as
a special case of the former.
In (\ref{eqErrorEvent}), the channel input-output pair $({\bf X}_{m}, {\bf Y})$ pays a role analogous to the source sequence ${\bf X}$ in (\ref{eqEncSuccEvent}),
and the incorrect codeword ${\bf X}_{m'}$ plays a role analogous to the reproduction sequence $\hat{\bf X}_{m}$ in (\ref{eqEncSuccEvent}).
In the proposed analogy, the reproduction alphabet is the alphabet of incorrect codewords, which is ${\cal X}$,
and the alphabet of the source is the product alphabet of the channel input-output pair ${\cal X}\times {\cal Y}$.
We make the following substitutions:
\begin{align}
\hat{\cal X} \, = \, {\cal X} \;\;\;\;\;\;\;\;\; & \longrightarrow \;\;\;\;\;\;\;\;\; \hat{\cal X}
\label{eqSubstitutions0} \\
{\cal X}\times {\cal Y} \;\;\;\;\;\;\;\;\; & \longrightarrow \;\;\;\;\;\;\;\;\; {\cal X}
\label{eqSubstitutions1} \\
Q(x)P(y \,|\, x) \;\;\;\;\;\;\;\;\; & \longrightarrow \;\;\;\;\;\;\;\;\; P(x)
\label{eqSubstitutions2} \\
d\big((x, y), \hat{x}\big) \; = \; \ln \frac{P(y \,|\, x)}{P(y \,|\, \hat{x})}
\;\;\;\;\;\;\;\;\; & \longrightarrow \;\;\;\;\;\;\;\;\;
d(x, \hat{x})
\end{align}
Definition (\ref{eqE0}) now acquires a specific form
\begin{align}
&
{E\mathstrut}_{0}(s, \rho, Q, D)
\;\; \triangleq
\nonumber \\
& \!\!
-\ln\sum_{x, \,y}Q(x)P(y \,|\, x)\left[\sum_{\hat{x}}Q(\hat{x})\left[\frac{P(y \,|\, x)}{P(y \,|\, \hat{x})}\,e^{-D}\right]^{-s}\right]^{\rho}
\!\!\! .
\label{eqE0Chan}
\end{align}
Note, that the minimal distortion now depends on the support of the distribution $Q$:
\begin{equation} \label{eqDminQsupport}
D_{\min}(Q) \; \triangleq \; \min_{y}\;\min_{x, \, \hat{x}:\; Q(x)\cdot Q(\hat{x}) \, > \, 0}\;\ln \frac{P(y \,|\, x)}{P(y \,|\, \hat{x})}.
\end{equation}
The results for decoding error follow as simple corollaries of the results for encoding success.
The definition of the implicit expression for decoding error exponent parallels Definition~1:
\begin{equation} \label{eqEeDefinition}
{E}_{e}(Q, R, D)
\, \triangleq \,
\min_{T(x, \, y)} \Big\{ D(T \| Q \circ P) + {\big| R(T, Q, D) - R \big|\mathstrut}_{}^{+} \Big\},
\end{equation}
where $R(T, Q, D)$ is defined with $W(\hat{x} \,|\, x, \, y)$, as in (\ref{eqRQPQ0detailed}).
Note, that ${E}_{e}(Q, R, D)$ is zero for $R \geq R(Q\circ P, \,Q, \,D)$.
\begin{cor}[Explicit formula]\label{cor1} \hfill
\begin{equation} \label{eqExplErr}
{E}_{e}(Q, R, D)
= \sup_{0 \, \leq \,\rho \,\leq \,1}
\;
\sup_{s\,\geq\,0} \;
\;
\big\{ {E\mathstrut}_{0}(s, \rho, Q, D) - \rho R \big\}.
\end{equation}
\end{cor}
\begin{cor}[Decoding error exponent]\label{cor2} \hfill
\begin{equation} \label{eqError}
\lim_{n \, \rightarrow \, \infty} \; \left\{-\frac{1}{n}\ln \Pr\{{\cal E}_{e}\}\right\} \;\; = \;\;
{E}_{e}(Q, R, D),
\end{equation}
{\em except possibly for $D =  D_{\min}(Q)$, given by (\ref{eqDminQsupport}), when the right-hand side is a lower bound.}
\end{cor}
The best random coding exponent is given by
\begin{thm}[Maximal decoding error exponent]\label{thm5} \hfill
\begin{equation} \label{eqMaxErr}
\sup_{Q(x)}\; \lim_{n \, \rightarrow \, \infty} \; \left\{-\frac{1}{n}\ln \Pr\{{\cal E}_{e}\}\right\} \;\; = \;\;
\sup_{Q(x)}\; {E}_{e}(Q, R, D),
\end{equation}
{\em for all $(R, D)$.}
\end{thm}
This result can be contrasted with the fixed composition exponent \cite[eq.~29]{WeinbergerMerhavISIT2015},
and (together with the explicit form (\ref{eqExplErr})) can be easily compared with
Forney's random coding bound \cite[eq.~24]{Forney68}.

Similarly, the results for the correct decoding event ${\cal E}_{c} \triangleq {{\cal E}_{e}\mathstrut}^{c}$ follow as simple corollaries of the results for encoding failure.
The definition of the implicit expression for correct decoding exponent parallels Definition~2:
\begin{equation} \label{eqCorrectExtended}
{E}_{c}^{*}(Q, R, D) \; \triangleq \; \min_{T(x, \, y): \;\; R(T, \, Q, \, D) \; \geq \; R} \; D(T \; \| \; Q \circ P).
\end{equation}
The superscript $^{*}$ serves to indicate that this exponent is different from the correct decoding exponent of the ML decoder, for $D=0$,
as here the receiver declares an error also when there is only equality in (\ref{eqErrorEvent}), i.e. no tie-breaking. This distinction is important in the case of the correct-decoding exponent, but not in the case of the decoding error exponent.

The following explicit formula gives the best convex ($\cup$) lower bound for (\ref{eqCorrectExtended}) as a function of $R$, for nonnegative distortion constraint $D$:
\begin{cor}[Explicit formula]\label{cor3}
{\em For distortion constraint $\;D \; \geq \; 0$,}
\begin{equation} \label{eqCD}
\text{\em l. c. e.}\;\big\{{E}_{c}^{*}(R)\big\} =
\sup_{\rho \,\geq \,0}
\;
\inf_{\substack{\\ s\,\geq\,0}} \;
\big\{ {E\mathstrut}_{0}(s, -\rho, Q, D) + \rho R \big\}.
\!\!
\end{equation}
{\em For $\;D \; < \; 0$, the right-hand side expression gives zero, which is strictly lower than ${E}_{c}^{*}(Q, R, D)$,
if $R \, > \, R(Q\circ P, \, Q, \,D)$.}
\end{cor}

\begin{cor}[Correct decoding exponent]\label{cor4} \hfill
\begin{equation} \label{eqCorrect}
\lim_{n \, \rightarrow \, \infty} \; \left\{-\frac{1}{n}\ln \Pr\{{\cal E}_{c}\}\right\} \;\; = \;\;
{E}_{c}^{*}(Q, R, D),
\end{equation}
{\em with the possible exception of points of discontinuity of the function ${E}_{c}^{*}(R, D)$.}
\end{cor}

\section{Relation to the exponents of the ML decoder} \label{Regular}
The maximum likelihood decoder has the same error exponent as the decoder (\ref{eqErrorEvent}) with $D=0$.
The Gallager exponent \cite{Gallager73} is obtained from the explicit formula (\ref{eqExplErr}) with $D=0$.

On the other hand, the {\em correct-decoding} exponent of the ML decoder is given by
\begin{align}
&
E_{c}(Q, R) \; = \; \min_{T(x, \, y)} \; \Big\{ D(T \; \| \; Q \circ P) + {\big| R - R(T, Q, 0)\big|\mathstrut}_{}^{+} \Big\}
\label{eqCorrectD0} \\
&
\overset{(*)}{=}
\;
\sup_{0 \, \leq \,\rho \, < \,1}
\;
\big\{ {E\mathstrut}_{0}(\tfrac{1}{1 \, - \, \rho}, \,-\rho, \,Q, \,0) + \rho R \big\}
\label{eqCorrectRho} \\
& =
\min_{U(x), \, W(y\,|\,x)} \Big\{ D(U\circ W \,\|\, Q \circ P)
\nonumber \\
& \;\;\;\;\;\;\;\;\;\;\;\;\;\;\;\;\;\;\;\;\;\;\;\;
+ {\big| R - D(U\,\|\,Q) - I(U\circ W)\big|\mathstrut}_{}^{+} \Big\}
\label{eqCorrectImplicitAnother} \\
& \leq
\;\;\;\;
\min_{W(y\,|\,x)}
\;\;\;\, \Big\{ D(Q\circ W \,\|\, Q \circ P)
+ {\big| R - I(Q\circ W) \big|\mathstrut}_{}^{+} \Big\},
\label{eqDueckKorner}
\end{align}
where the equality ($*$) is shown in \cite{TridenskiZamir17},
(\ref{eqCorrectImplicitAnother}) is another implicit formula,
which is more convenient to derive (as well as convenient for the derivation of (\ref{eqCorrectRho})),
and (\ref{eqDueckKorner}) is the fixed composition exponent of Dueck and K{\"o}rner
\cite{DueckKorner79}.
Note, that the correct-decoding exponent of the ML decoder (\ref{eqCorrectD0})
is different from the corresponding exponent of the decoder (\ref{eqErrorEvent}), (\ref{eqCorrectExtended}) with $D=0$.
The difference is the result of the tie-breaking, the ML decoder performs.
Without tie-breaking, the decoding can be termed as {\em strict}.
The two nondecreasing curves, given by (\ref{eqCorrectD0}) and (\ref{eqCorrectExtended}), with $D=0$,
both as a function of $R$,
coincide for slopes $< 1$. Then, for greater $R$, the correct decoding exponent of the ML decoder
continues to increase linearly, with constant slope $=1$,
while the exponent of the strict decoder (\ref{eqCorrectExtended}), with $D=0$, eventually becomes $+\infty$.

\section{Random coding error exponent of \\the erasure/list optimum tradeoff decoder} \label{Tradeoff}
In \cite[eq.~11]{Forney68} the decoding error event, given that message $m$ is transmitted, is defined as
\begin{equation} \label{eqForneyErrorEvent}
{\cal E}_{m} \; \triangleq \; \bigg\{\ln\;\frac{P({\bf Y} \,|\, {\bf X\mathstrut}_{m})}{\sum_{m'\,\neq\,m}P({\bf Y} \,|\, {\bf X\mathstrut}_{m'})} \;\; < \;\; nD\bigg\},
\end{equation}
which is different than the error event of the simplified decoder (\ref{eqErrorEvent}).
We derive an exact i.i.d. random coding error exponent for this decoder, for all values of the threshold $D$. The result is given by the minimum of two terms.
One of the terms is given by (\ref{eqExplErr}) and corresponds to the error exponent
of the source/channel duality decoder,
and the other term is very similar, defined as
\begin{equation} \label{eqEDefined}
{E\mathstrut}^{e}(Q, R, D)
\triangleq
\sup_{\rho\,\geq\,0}
\;
\sup_{0 \, \leq \, s \,\leq \,1}
\;
\big\{ {E\mathstrut}_{0}(s, \rho, Q, D) - \rho R \big\}.
\end{equation}
\begin{thm}[Error exponent for optimum tradeoff decoder]\label{thm6}
\begin{align}
& \sup_{Q(x)} \; \lim_{n \, \rightarrow \, \infty} \; \left\{-\frac{1}{n}\ln \Pr \, \{{\cal E}_{m}\}\right\}
\;\; =
\nonumber \\
& \sup_{Q(x)} \; \min \big\{ {E\mathstrut}^{e}(Q, R, D),\;
{E}_{e}(Q, R, D)\big\},
\label{eqExactForney}
\end{align}
{\em for all $(R, D)$, with the possible exception of
points with $R = -D$,
and points with
$0 >D \in{\{D_{\min}(Q)\}\mathstrut}_{Q}\,$
(for $R >  -D$),
where ${\{D_{\min}(Q)\}\mathstrut}_{Q}\,$ is a finite set,
where still}
\begin{align}
& \sup_{Q(x)} \; \liminf_{n \, \rightarrow \, \infty} \; \left\{-\frac{1}{n}\ln \Pr \, \{{\cal E}_{m}\}\right\}
\;\; \geq
\nonumber \\
& \sup_{Q(x)} \; \min \big\{ {E\mathstrut}^{e}(Q, R, D),\;
{E}_{e}(Q, R, D)\big\},
\nonumber \\
& \sup_{Q(x)} \; \limsup_{n \, \rightarrow \, \infty} \; \left\{-\frac{1}{n}\ln \Pr \, \{{\cal E}_{m}\}\right\}
\;\; \leq
\nonumber \\
& \sup_{Q(x)} \; \lim_{\epsilon\,\rightarrow\,0}\;
\min \big\{{E\mathstrut}^{e}(Q, \, R \, - \, \epsilon, \, D),\;
{E}_{e}(Q, \, R, \, D \, - \, \epsilon)\big\},
\nonumber
\end{align}
{\em with ${E\mathstrut}^{e}(Q, R, D)$ and ${E}_{e}(Q, R, D)$ given explicitly by (\ref{eqExplErr}) and (\ref{eqEDefined}).}
\end{thm}

For comparison, the random coding lower bound given by \cite[eq.~24]{Forney68},
can be written,
without maximization over $Q$, in our present terms (with $D$ in place of $T$ and a differently defined $s$, not scaled by $\rho$) as
\begin{equation} \label{eqForneyLowerBound}
{E}_{bound}(Q, R, D) =
\sup_{0 \, \leq \, \rho \,\leq \,1}
\;
\sup_{0 \, \leq \, s \,\leq \,1}
\;
\big\{ {E\mathstrut}_{0}(s, \rho, Q, D) - \rho R \big\}.
\end{equation}
Observe, that indeed
\begin{displaymath}
\min \big\{ {E\mathstrut}^{e}(Q, R, D),\;
{E}_{e}(Q, R, D)\big\}
\; \geq \;
{E}_{bound}(Q, R, D).
\end{displaymath}
The next lemma shows, that the exact exponents (\ref{eqMaxErr}), (\ref{eqExactForney}),
and Forney's lower bound,
given by the maximum of (\ref{eqForneyLowerBound}) over $Q$,
coincide for $D\geq 0$ (erasure regime).
\begin{lemma} \label{lemma1}
For $D \geq 0$
\begin{displaymath}
{E}_{e}(Q, R, D)
\; = \;
{E}_{bound}(Q, R, D)
\end{displaymath}
\end{lemma}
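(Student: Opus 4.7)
The plan is to show that, for $D \geq 0$, the unconstrained supremum over $s \geq 0$ that defines $E_{e}(Q,R,D)$ is already attained at some $s^\dagger \leq 1$, so that the extra constraint $s \leq 1$ appearing in $E_{bound}(Q,R,D)$ is redundant. Since the $-\rho R$ term does not depend on $s$, it suffices to prove, for every fixed $\rho \in [0,1]$, that $\sup_{s \geq 0} {E\mathstrut}_{0}(s,\rho,Q,D) = \sup_{0 \leq s \leq 1} {E\mathstrut}_{0}(s,\rho,Q,D)$.

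The starting observation is the elementary identity
\begin{equation*}
{E\mathstrut}_{0}(s,\rho,Q,D) \;=\; {E\mathstrut}_{0}(s,\rho,Q,0) \;-\; s\,\rho\, D ,
\end{equation*}
obtained by pulling the factor $e^{s\rho D}$ out of the inner bracket in (\ref{eqE0Chan}); thus the $D$-dependence is purely linear in $s$. I would then show that the function $s \mapsto {E\mathstrut}_{0}(s,\rho,Q,0)$ is concave on $[0,\infty)$ for every $\rho \in [0,1]$. The inner bracket $\phi(x,y,s) = \sum_{\hat{x}} Q(\hat{x}) [P(y|x)/P(y|\hat{x})]^{-s}$ is a moment-generating function in $s$, hence log-convex; raising to the non-negative power $\rho$ preserves log-convexity; and a non-negative combination of log-convex functions is log-convex (by H\"older). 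Consequently the argument of the logarithm in ${E\mathstrut}_{0}(s,\rho,Q,0)$ is log-convex in $s$, and its negative logarithm is concave.

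Next I would identify the interior maximizer of ${E\mathstrut}_{0}(\cdot,\rho,Q,0)$. A short differentiation, exploiting the symmetry $1 - s^{*}\rho = s^{*}$ at $s^{*} = 1/(1+\rho) \in (0,1]$, shows that $s^{*}$ is a critical point, at which the expression collapses to the Gallager form $-\ln \sum_y [\sum_x Q(x)P(y|x)^{1/(1+\rho)}]^{1+\rho}$; concavity then upgrades $s^{*}$ to the unique global maximizer on $[0,\infty)$. Hence $\partial_s {E\mathstrut}_{0}(s,\rho,Q,0)$ is non-increasing in $s$, non-negative for $s \leq s^{*}$ and non-positive for $s \geq s^{*}$. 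Combining this with the identity above, any interior maximizer $s^\dagger > 0$ of ${E\mathstrut}_{0}(\cdot,\rho,Q,D)$ satisfies $\partial_s {E\mathstrut}_{0}(s^\dagger,\rho,Q,0) = \rho D \geq 0$, which by the monotonicity just noted forces $s^\dagger \leq s^{*} \leq 1$; and if no such interior critical point exists, concavity places the supremum at $s = 0 \leq 1$. The degenerate case $\rho = 0$ is immediate since ${E\mathstrut}_{0}(s,0,Q,D) \equiv 0$.

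The main non-routine step is the concavity claim for ${E\mathstrut}_{0}(\cdot,\rho,Q,0)$, which rests on the (standard but non-elementary) fact that a sum of log-convex functions is log-convex; given that, the identity above and the critical-point computation are routine bookkeeping.
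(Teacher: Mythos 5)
Your proof is correct. The paper disposes of the lemma in one stroke: it applies H\"older's inequality to show directly that $E_0\bigl(\tfrac{1}{1+\rho},\rho,Q,D\bigr) \geq E_0(s,\rho,Q,D)$ for every $s \geq \tfrac{1}{1+\rho}$ when $D \geq 0$, which immediately makes the constraint $s \leq 1$ in $E_{bound}$ redundant. You arrive at the same conclusion by a more structural decomposition: you isolate the $D$-dependence as the linear term $-s\rho D$, establish concavity of $E_0(\cdot,\rho,Q,0)$ from log-convexity of moment generating functions, locate the critical point at $s^{*}=\tfrac{1}{1+\rho}$ via the symmetry $1-s^{*}\rho = s^{*}$ (which indeed makes the derivative of $\sum_y A_y(1-s\rho)A_y(s)^{\rho}$ vanish, where $A_y(t)=\sum_x Q(x)P(y|x)^t$), and then observe that subtracting the non-negative linear term $s\rho D$ can only move the concave maximizer to the left of $s^{*}\leq 1$. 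Both arguments rest on the same underlying convexity of $t\mapsto\ln A_y(t)$; the paper's H\"older step is exactly the inequality $A_y(1-s\rho)A_y(s)^{\rho}\geq A_y(\tfrac{1}{1+\rho})^{1+\rho}$ that your concavity-plus-critical-point analysis also produces. Your version is longer but buys an explicit identification of the optimizing $s$ at $D=0$ and makes the role of the hypothesis $D\geq 0$ transparent (the penalty pushes the maximizer leftward); the paper's version is tighter and bundles the $D$-term and the $\ln$-term into a single inequality.
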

\begin{proof}
\begin{align}
& {E\mathstrut}_{0}\big(s\, = \,\tfrac{1}{1\,+\,\rho}, \,\rho, \,Q, \,D\big)
\;\; =
\nonumber \\
& -\ln\sum_{x, \,y}Q(x)P(y \,|\, x)\left[\sum_{\hat{x}}Q(\hat{x})\left[\frac{P(y \,|\, x)}{P(y \,|\, \hat{x})}\right]^{-\frac{1}{1\,+\,\rho}}\right]^{\rho}
\!\!\!\!
- \tfrac{\rho}{1 \, + \, \rho} D
\nonumber \\
&
\overset{(*)}{\geq} \;
-\ln\sum_{x, \,y}Q(x)P(y \,|\, x)\left[\sum_{\hat{x}}Q(\hat{x})\left[\frac{P(y \,|\, x)}{P(y \,|\, \hat{x})}\right]^{-s}\right]^{\rho}
\!\!\!
- s \rho D
\nonumber \\
& = \;\;
{E\mathstrut}_{0}(s, \rho, Q, D), \;\;\;\;\;\; s \, \geq \, \tfrac{1}{1\,+\,\rho},
\nonumber
\end{align}
where ($*$) holds by H\"older's inequality for $s \, \geq \, \tfrac{1}{1\,+\,\rho}$ and $D\,\geq\,0$.
We conclude, that
\begin{align}
& \sup_{0\,\leq\,\rho\,\leq\,1}
\;\;\;\,
\sup_{s\,\geq\,0}\;
\;\;
\big\{{E\mathstrut}_{0}(s, \rho, Q, D) \; - \; \rho R \big\}
\; =
\nonumber \\
&
\sup_{0\,\leq\,\rho\,\leq\,1}\;
\sup_{0\,\leq\,s\,\leq\,1}\;
\big\{{E\mathstrut}_{0}(s, \rho, Q, D) \; - \; \rho R \big\}.
\nonumber
\end{align}
\end{proof}
As for the $D < 0$ case (list decoding regime),
we note that
the exponent (\ref{eqMaxErr}) becomes $+\infty$ for $D < 0$, and
the exponent (\ref{eqExactForney}) becomes $+\infty$ for $0 < R < -D$,
while Forney's lower bound stays finite. For details see \cite{TridenskiZamir17}.

\section{A selected proof} \label{Derivation}
Here we derive a lower bound on the encoding failure exponent,
which, together with an upper bound, derived in \cite{TridenskiZamir17},
results in Theorem~\ref{thm4}.
Due to the lack of space, all the other proofs are deferred to \cite{TridenskiZamir17}.

The derivation uses a generic auxiliary lemma:
\begin{lemma} \label{lemma2}
{\em Let $Z_{m}\,\sim\, \text{i.i.d}\;\text{Bernoulli}\left({e\mathstrut}^{-nI}\right)$,
$m \, = \, 1, \, 2, \, ... \, , \, {e\mathstrut}^{nR}$. If $I\,\leq\,R \, - \, \epsilon$, with $\epsilon \, > \, 0$, then}
\begin{equation} \label{eqBoundForZero}
\Pr\,\Bigg\{\sum_{m \, = \, 1}^{{e\mathstrut}^{nR}}Z_{m} \, = \, 0\Bigg\}
\;\; < \;\; \exp\big\{-{e\mathstrut}^{n\epsilon}\big\}.
\end{equation}
\end{lemma}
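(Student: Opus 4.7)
The plan is to exploit independence and then apply the elementary inequality $1-x < e^{-x}$ (valid strictly for $x\in(0,1]$) to convert the product into a double exponential. First I would write
\[
\Pr\Bigg\{\sum_{m \, = \, 1}^{e^{nR}} Z_m \, = \, 0\Bigg\} \;\; = \;\; \prod_{m \, = \, 1}^{e^{nR}} \Pr\{Z_m \, = \, 0\} \;\; = \;\; \big(1 \, - \, e^{-nI}\big)^{e^{nR}},
\]
using the i.i.d. Bernoulli structure. Since $e^{-nI}\in(0,1]$ (and in fact is strictly less than $1$ for any $I>0$, and even if $I=0$ the event in question has probability zero so the inequality is trivial), I apply $1-x < e^{-x}$ to obtain
\[
\big(1 \, - \, e^{-nI}\big)^{e^{nR}} \;\; < \;\; \exp\!\big\{-e^{nR}\cdot e^{-nI}\big\} \;\; = \;\; \exp\!\big\{-e^{n(R \, - \, I)}\big\}.
\]

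Next I would use the hypothesis $I \le R - \epsilon$, which gives $R - I \ge \epsilon$ and hence $e^{n(R-I)} \ge e^{n\epsilon}$. Substituting this into the previous display yields
\[
\exp\!\big\{-e^{n(R \, - \, I)}\big\} \;\; \leq \;\; \exp\!\big\{-e^{n\epsilon}\big\},
\]
which chained with the strict inequality above completes the bound \eqref{eqBoundForZero}.

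The only mild subtlety is that $e^{nR}$ must be interpreted as an integer (a number of codewords); this is harmless, and one can simply take the floor without affecting the argument, since replacing $e^{nR}$ by $\lfloor e^{nR}\rfloor \ge \tfrac{1}{2}e^{nR}$ only weakens the double exponent by a factor of $2$, which is absorbed by slightly shrinking $\epsilon$. Since the lemma is a completely elementary concentration statement there is no genuine obstacle; its role in the paper is purely as a black box used in the derivation of the encoding failure exponent, where it will later be invoked with $I$ chosen to be $R(T,Q,D)$ for the dominant type class $T$.
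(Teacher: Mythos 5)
Your proof is correct and is the obvious argument for this elementary fact: factor by independence into $(1-e^{-nI})^{e^{nR}}$, apply $1-x<e^{-x}$ to get a double exponential $\exp\{-e^{n(R-I)}\}$, and invoke the hypothesis $R-I\ge\epsilon$. The conference paper does not actually include a proof of this lemma (it is stated as a ``generic auxiliary lemma'' and its proof is deferred to the extended version \cite{TridenskiZamir17}), but this is surely the intended derivation, and your remarks on integrality of $e^{nR}$ and the degenerate case $I=0$ appropriately close the small gaps in the statement as written.
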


We use the method fo types \cite{CsiszarKorner}, with notation ${P\mathstrut}_{\bf x}^{}$, $\,T({P\mathstrut}_{\bf x}^{})$
for types and type classes, and
${P\mathstrut}_{\hat{\bf x} \, | \,{\bf x}}^{}$,
$\,T({P\mathstrut}_{\hat{\bf x} \, | \,{\bf x}}^{}, \, {\bf X})$
for conditional types and type classes, respectively.
We upper-bound the probability of encoding failure as follows:
\begin{align}
&
\Pr\{{\cal E}_{f}\}
\; \leq
\sum_{{P\mathstrut}_{\bf x}^{}:\;\;
{R\mathstrut}^{\,\text{types}}({P\mathstrut}_{\bf x}^{}, \, Q, \, D)\; \leq \; R \, - \, 2\epsilon_{1}
}
\,\underbrace{\Pr\,\big\{{\bf X} \, \in \, T({P\mathstrut}_{\bf x}^{})\big\}}_{\leq\, 1}
\,\times
\nonumber
\end{align}
\begin{align}
&
\min_{\substack{{P\mathstrut}_{\hat{\bf x} \, | \,{\bf x}}^{}:\\
\\
d({P\mathstrut}_{{\bf x},\,\hat{\bf x}}^{}) \; \leq \; D
}}
\!\!\Pr\,\bigg\{
\sum_{m}
\mathbbm{1}_{\big\{\hat{\bf X\mathstrut}_{m} \; \in \; T({P\mathstrut}_{\hat{\bf x} \, | \,{\bf x}}^{}, \, {\bf X})\big\}}(m)
= 0
\,\bigg|\, {P\mathstrut}_{\bf x}^{}\bigg\}
\nonumber \\
&
\;\;\;\;\;\;\;\;\;\;\;\;\;
+ \sum_{{P\mathstrut}_{\bf x}^{}:\;\;
{R\mathstrut}^{\,\text{types}}({P\mathstrut}_{\bf x}^{}, \, Q, \, D)\; \geq \; R \, - \, 2\epsilon_{1}}
\,\Pr\,\big\{{\bf X} \, \in \, T({P\mathstrut}_{\bf x}^{})\big\}
\nonumber \\
& \;\;\;\;\;\,  = \; S_{1} \, + \, S_{2}.
\label{eqS1S2}
\end{align}
\begin{align}
& S_{1} \overset{(a)}{\leq}
\sum_{\substack{{P\mathstrut}_{\bf x}^{}:\\
\\
{R\mathstrut}^{\,\text{types}}({P\mathstrut}_{\bf x}^{}, \, Q, \, D)\; \leq \; R \, - \, 2\epsilon_{1}
}}\;
\min_{\substack{{P\mathstrut}_{\hat{\bf x} \, | \,{\bf x}}^{}:\\
\\
d({P\mathstrut}_{{\bf x},\,\hat{\bf x}}^{}) \; \leq \; D
}}
\!\!\Pr\Bigg\{\sum_{m \, = \, 1}^{{e\mathstrut}^{nR}}\!Z_{m} = 0\Bigg\}
\nonumber \\
& \overset{(b)}{\leq}
\;
\sum_{{P\mathstrut}_{\bf x}^{}:\;\;
{R\mathstrut}^{\,\text{types}}({P\mathstrut}_{\bf x}^{}, \, Q, \, D)\; \leq \; R \, - \, 2\epsilon_{1}
}
\;\;\;\;\;\;\;\;\;
\;\;\;\;\,
\Pr\Bigg\{\sum_{m \, = \, 1}^{{e\mathstrut}^{nR}} \! B_{m} = 0\Bigg\}
\nonumber \\
& \overset{(c)}{\leq} \;
\sum_{{P\mathstrut}_{\bf x}^{}}
\exp\big\{-{e\mathstrut}^{n\epsilon_{1}}\big\}
\;\leq \;
{(n\, + \, 1)\mathstrut}^{|{\cal X}|}
\cdot
\exp\big\{-{e\mathstrut}^{n\epsilon_{1}}\big\}.\!\!
\label{eqPfUpperBoundS1}
\end{align}
\begin{align}
& S_{2}
\overset{(d)}{\leq}
\;\,
\sum_{{P\mathstrut}_{\bf x}^{}:\;\;
{R\mathstrut}^{\,\text{types}}({P\mathstrut}_{\bf x}^{}, \, Q, \, D)\; \geq \; R \, - \, 2\epsilon_{1}}
\;\;\;\,
\exp\big\{-nD({P\mathstrut}_{\bf x}^{}\;\|\;P )\big\}
\nonumber \\
& \overset{(e)}{\leq}
\;\;
\sum_{{P\mathstrut}_{\bf x}^{}:\;\;
R({P\mathstrut}_{\bf x}^{}, \, Q, \, D \, - \, \epsilon_{2})\; \geq \; R \, - \, 2\epsilon_{1} \, - \, \epsilon_{2}}
\exp\big\{-nD({P\mathstrut}_{\bf x}^{}\;\|\;P )\big\}
\nonumber \\
& \overset{(f)}{\leq} \;
{(n\, + \, 1)\mathstrut}^{|{\cal X}|}
\exp\big\{-n E_{f}^{\,\text{types}}(R - 2\epsilon_{1} - \epsilon_{2}, \, D - \epsilon_{2})\big\}
\nonumber \\
& \overset{(g)}{\leq} \;
{(n\, + \, 1)\mathstrut}^{|{\cal X}|}
\exp\big\{-n E_{f}(R - 2\epsilon_{1} - \epsilon_{2}, \, D - \epsilon_{2})\big\}.\!\!
\label{eqPfUpperBound}
\end{align}
Explanation of steps:
($a$)   holds for sufficiently large $n$, when
\begin{align}
& \Pr\,\big\{\hat{\bf X\mathstrut}_{m} \; \in \; T\big({P\mathstrut}_{\hat{\bf x} \, | \,{\bf x}}^{}, \, {\bf X}\big)
\;\big|\;
{\bf X} \, \in \, T({P\mathstrut}_{\bf x}^{})\big\}
\; \geq
\nonumber \\
&
\exp\Big\{-n\left[D\big({P\mathstrut}_{{\bf x}, \, \hat{\bf x}}^{} \,\big\|\, {P\mathstrut}_{\bf x}^{} \!\times Q\big) \, + \, \epsilon_{1}\right]\Big\},
\nonumber
\end{align}
with
\begin{displaymath}
Z_{m} \sim 
\text{Bernoulli}\left(
\exp\Big\{-n\left[D\big({P\mathstrut}_{{\bf x}, \, \hat{\bf x}}^{} \,\big\|\, {P\mathstrut}_{\bf x}^{} \!\times Q\big) + \epsilon_{1}\right]\Big\}
\right).
\end{displaymath}
($b$) holds for
\begin{displaymath}
B_{m} \sim
\text{Bernoulli}\left(
\exp\Big\{-n\big[{R\mathstrut}^{\,\text{types}}({P\mathstrut}_{\bf x}^{}, Q, D) \, + \, \epsilon_{1}\big]\Big\}
\right),
\end{displaymath}
where
\begin{equation} \label{eqRTypes}
{R\mathstrut}^{\,\text{types}}({P\mathstrut}_{\bf x}^{}, Q, D)
\; \triangleq \;
\min_{{P\mathstrut}_{\hat{\bf x} \, | \,{\bf x}}^{}: \;\;
d({P\mathstrut}_{{\bf x},\,\hat{\bf x}}^{}) \; \leq \; D
}\;
D\big({P\mathstrut}_{{\bf x}, \, \hat{\bf x}}^{} \,\big\|\, {P\mathstrut}_{\bf x}^{} \!\times Q\big).
\end{equation}
($c$) holds by Lemma~\ref{lemma2} for
\begin{displaymath}
I \;\; = \;\; {R\mathstrut}^{\,\text{types}}({P\mathstrut}_{\bf x}^{}, Q, D) \, + \, \epsilon_{1}
\;\; \leq \;\; R \, - \, 2\epsilon_{1} \, + \, \epsilon_{1}
\;\; = \;\; R \, - \, \epsilon_{1}.
\end{displaymath}
($d$) uses the upper bound on the probability of a type.\newline
($e$) Let ${W\mathstrut}^{*}$ denote the conditional distribution, achieving $R({P\mathstrut}_{\bf x}^{}, \, Q, \, D \, - \, \epsilon_{2})<+\infty$
for some $\epsilon_{2}\,>\,0$.
This implies
\begin{align}
D({P\mathstrut}_{\bf x}^{} \circ {W\mathstrut}^{*} \,\|\, {P\mathstrut}_{\bf x}^{} \!\times Q)
\;\; & = \;\; R({P\mathstrut}_{\bf x}^{}, \, Q, \, D \, - \, \epsilon_{2}),
\label{eqComb1} \\
d({P\mathstrut}_{\bf x}^{} \circ {W\mathstrut}^{*}) \;\; & \leq \;\; D \, - \, \epsilon_{2}.
\nonumber
\end{align}
Let ${W\mathstrut}_{n}^{*}$
denote a quantized version of the conditional distribution ${W\mathstrut}^{*}$
with variable precision $1/\big(n {P\mathstrut}_{\bf x}^{}(x)\big)$, i.e. a set of types with denominators $n {P\mathstrut}_{\bf x}^{}(x)$,
such that the joint distribution ${P\mathstrut}_{\bf x}^{} \circ {W\mathstrut}_{n}^{*}$ is a type with denominator $n$.
Observe, that the differences between ${P\mathstrut}_{\bf x}^{} \circ {W\mathstrut}^{*}$ and ${P\mathstrut}_{\bf x}^{} \circ {W\mathstrut}_{n}^{*}$
do not exceed $\tfrac{1}{n}$.
Therefore,
since the divergence, as a function of ${P\mathstrut}_{\bf x}^{} \circ W$, has bounded derivatives, and also the distortion measure $d(x, \hat{x})$
is bounded, for any $\epsilon_{2}\,>\,0$ there exists $n$ large enough,
such that the quantized distribution ${W\mathstrut}_{n}^{*}$
satisfies
\begin{align}
D({P\mathstrut}_{\bf x}^{} \circ {W\mathstrut}_{n}^{*} \,\|\, {P\mathstrut}_{\bf x}^{} \!\times Q)
\;\; & \leq \;\; D({P\mathstrut}_{\bf x}^{} \circ {W\mathstrut}^{*} \,\|\, {P\mathstrut}_{\bf x}^{} \!\times Q) \, + \, \epsilon_{2},
\label{eqComb2} \\
d({P\mathstrut}_{\bf x}^{} \circ {W\mathstrut}_{n}^{*}) \;\; & \leq \;\; D.
\nonumber
\end{align}
The last inequality implies
\begin{align}
D({P\mathstrut}_{\bf x}^{} \circ {W\mathstrut}_{n}^{*} \,\|\, {P\mathstrut}_{\bf x}^{} \!\times Q)
\;\; & \geq \;\; {R\mathstrut}^{\,\text{types}}({P\mathstrut}_{\bf x}^{}, Q, D).
\label{eqComb3}
\end{align}
The relations (\ref{eqComb3}), (\ref{eqComb2}), (\ref{eqComb1}) together give
\begin{align}
R({P\mathstrut}_{\bf x}^{}, \, Q, \, D \, - \, \epsilon_{2}) \, + \, \epsilon_{2}
\;\; & \geq \;\; {R\mathstrut}^{\,\text{types}}({P\mathstrut}_{\bf x}^{}, Q, D).
\label{eqResult4}
\end{align}
This explains ($e$).
($f$) uses the definition
\begin{equation} \label{eqEfTypes}
E_{f}^{\,\text{types}}(R, D)
\;\; \triangleq \;\;
\min_{{P\mathstrut}_{\bf x}^{}: \;\;
R({P\mathstrut}_{\bf x}^{}, \,Q, \,D) \; \geq \; R} \; D({P\mathstrut}_{\bf x}^{} \;\|\; P).
\end{equation}
($g$) $E_{f}^{\,\text{types}}(R, D)$ is bounded from below by $E_{f}(R, D)$ defined in (\ref{eqEfDefinition}).
We conclude from (\ref{eqS1S2}), (\ref{eqPfUpperBoundS1}), (\ref{eqPfUpperBound}):
\begin{align}
& \liminf_{n \, \rightarrow \, \infty} \; \left\{-\frac{1}{n}\ln \Pr\{{\cal E}_{f}\}\right\}
\;\; \geq \;\;
\lim_{\epsilon\,\rightarrow\,0}\;
E_{f}(R \, - \, \epsilon, \, D \, - \, \epsilon).
\nonumber
\end{align}


\bibliographystyle{IEEEtran}
\bibliography{expduality}

\end{document}